\algrenewcommand\algorithmicindent{1em}
\theoremstyle{remark}
\newtheoremstyle{boldthm} 
  {3pt}                   
  {3pt}                   
  {\itshape}              
  {}                      
  {\bfseries}             
  {.}                     
  { }                     
  {}                  
\theoremstyle{boldthm}
\newtheorem{theorem}{Theorem}[]
\newtheorem{lemma}{Lemma}[]
\newtheorem{proposition}{Proposition}[]
\newtheorem{definition}{Definition}[]
\newcommand{\EX}{\mathbf{E}}
\newcommand{\PR}{\mathbb{P}}
\newcommand{\oppar}[1]{\left( #1 \right)}
\def\mc{\mathcal}
\newcommand*{\addFileDependency}[1]{
\typeout{(supplementary)}
%
%
\@addtofilelist{supplementary}
%
\IfFileExists{supplementary}{}{\typeout{No file supplementary.}}
}\makeatother
\newcommand*{\myexternaldocument}[1]{%
\externaldocument{supplementary}%
\addFileDependency{supplementary.tex}%
\addFileDependency{supplementary.aux}%
}
\newcommand{\secref}[1]{\IfBeginWith{#1}{sec:}{S-}{}\ref{#1}}
\begin{document}

\preprint{APS/123-QED}

\title{\textbf{Expander qLDPC Codes against Long-range Correlated Errors in Memory} 
}%

\author{Yash Deepak Kashtikar}
\email{Student authors are listed in alphabetical order.}
\affiliation{%
 Department of Electrical Engineering, Indian Institute of Technology Madras,  India
}
\author{Pranay Mathur}
\email{Student authors are listed in alphabetical order.}
\affiliation{%
 Department of Electrical Engineering, Indian Institute of Technology Madras,  India
}
\author{Sudharsan Senthil}
\email{Student authors are listed in alphabetical order.}
\affiliation{%
 Department of Electrical Engineering, Indian Institute of Technology Madras,  India
}
\author{Avhishek Chatterjee}%
 \email{Contact author: avhishek@ee.iitm.ac.in}
\affiliation{%
 Department of Electrical Engineering, Indian Institute of Technology Madras, India
}%

\date{\today}

\begin{abstract}
Fault-tolerance using constant space-overhead  against long-range correlated errors is an important practical question. In the pioneering works \cite{TerhalB2005,AliferisGP2005,AharonovKP2006}, fault-tolerance using poly-logarithmic overhead against long-range correlation modeled by pairwise joint Hamiltonian  was proven when the total correlation of an error at a qubit location with errors at other locations was  $O(1)$, i.e., the total correlation at a location did not scale with the number of qubits. This condition, under spatial symmetry, can simply be stated as the  correlation between locations decaying faster than $\frac{1}{\text{dist}^{\text{dim}}}$. However, the pairwise Hamiltonian model remained intractable for constant overhead codes. Recently, \cite{BagewadiC2024memory} introduced and analyzed the generalized hidden Markov random field (MRF) model, which provably captures all stationary distributions, including long-range correlations \cite{KunschGK1995}. It resulted in a noise threshold in the case of long-range correlation, for memory corrected by the linear-distance Tanner codes \cite{LeverrierZ2022} for super-polynomial time. In this paper, we prove a similar result for square-root distance qLDPC codes and provide an explicit expression for the noise threshold in terms of the code rate, for up to $o(\sqrt{\text{\#qubits}})$ scaling of the total correlation of error at a location with errors at other locations.

\end{abstract}

\maketitle

Quantum low density parity check codes (qLDPC) codes \cite{TillichZ2013qLDPC,KovalevP2013LDPC,gottesman2014fault,LeverrierTZ2015qexpander} are known to have a polynomial distance and constant  space overhead. This is a significant improvement over the poly-logarithmic space overhead required for surface codes, and may lead to practically useful quantum computers using sub-million physical qubits. Expander qLDPC codes with sublinear distance \cite{TillichZ2013qLDPC} were shown to provide fault-tolerance against i.i.d. and local stochastic errors  \cite{KovalevP2013LDPC,gottesman2014fault,FawziGL_STOC2018,Grospellierthesis}. Furthermore, sublinear-distance qLDPC codes have shown encouraging performance in moderately sized practical quantum memories.\cite{BravyiCG2024high}.

Despite significant advances in theory and practice towards achieving fault-tolerance, some criticisms remain. Two main issues raised by critics are: (i) whether syndromes can be extracted from constant rate codes using (topologically) local gate operations, and (ii) whether the constant overhead fault-tolerance schemes can correct long-range correlated errors. Recently, some interesting work has been done to address the first issue, where promising local and semi-local syndrome extraction methods have been proposed \cite{BerthusenGG2025,PattisonKP2025}. In this work, we try to address the second issue, that of fault-tolerance against long-range correlated errors.

Fault-tolerance against the local stochastic error model is an important step towards studying correlation. However, since this model has an exponential correlation decay \cite{TerhalB2005}, the problem of fault-tolerance against long-range correlation remains open. The seminal series of papers \cite{TerhalB2005,AliferisGP2005,AharonovKP2006} showed fault-tolerance against long-range correlated errors using poly-logarithmic space overhead. 

Though a well-chosen system and bath joint Hamiltonian can model any correlation, it is often analytically intractable. In \cite{TerhalB2005,AliferisGP2005,AharonovKP2006},  pairwise joint Hamiltonian models were introduced and it was shown that if for any qubit its total correlation with all other qubits is $O(1)$, fault-tolerance can be achieved. Under spatial symmetry, this condition  is satisfied if the correlation between two locations decays faster than $\frac{1}{\text{dist}^{\text{dim}}}$.

However, studying constant overhead codes such as qLDPC codes, using the pairwise joint Hamiltonian model remained intractable. Recently, in \cite{BagewadiC2024memory}, the generalized hidden Markov random field model was proposed, which could model all stationary distributions including the ones where any qubit's total correlation with other locations scale as $({\text{\#qubits}})^{\gamma}$, for  $0\le \gamma \le 1$. This model was proven to capture correlation structures that are not captured by pairwise joint Hamiltonian models \cite{BagewadiC2024memory}. Also, any quantum memory error corrected using linear  distance Tanner codes (from \cite{LeverrierZ2022,GuPT2023}) was shown to have a super-polynomial retention time when the total correlation of a qubit location with others grows no faster than $\sqrt{{\text{\#qubits}}}$.

In this paper, we study fault-tolerance of the square-root distance expander qLDPC codes   \cite{TillichZ2013qLDPC,KovalevP2013LDPC,gottesman2014fault,LeverrierTZ2015qexpander} in long-range correlated errors modeled as a generalized hidden Markov random field (MRF)  \cite{BagewadiC2024memory} and show a positive noise threshold (of $\Theta(1)$) when the total correlation of a qubit grows no faster than $\sqrt{{\text{\#qubits}}}$. To our knowledge, this is the first provable noise threshold result for sublinear distance constant overhead codes against long-range correlations where the total correlation at a location increases with the number of qubits. 

\subsection*{Fault-tolerant memory}

The memory contains a state of $n$ qubits that encodes $k$ logical qubits using a stabilizer code. The evolution of this memory is modeled as a sequence of periodic phases, each of which contains a \textit{rest phase} and an \textit{error correction phase}. The $n$-qubit state can decohere and accumulate errors during the rest phase. Since an arbitrary error can be decomposed as a linear combination of Pauli flips \cite{aharonov1997fault, AharonovB2008}, following the existing literature on fault-tolerance using qLDPC codes \cite{gottesman2014fault,fawzi2020constant, YamasakiK2024time}, we consider only Pauli errors.

The error correction phase has three main sub-phases. The first sub-phase is the extraction of syndrome bits by employing ancillary qubits. Next, based on the syndrome bits, a decoding algorithm estimates the locations and types of Pauli flips ($X$ or $Z$) that have occurred. Finally, appropriate Pauli operators are applied to those locations to correct the errors. Depending on the choice of code and decoding algorithm, and the distribution of Pauli errors, all qubit errors may or may not be corrected in an error correction phase.

We index the periodic phases by discrete time-steps $t=1, 2, \ldots$ and use the notation $\{L_{i,t}\}$ to denote the locations at which Pauli errors occur during the $t^{th}$ rest phase. An error at location $i \in [n]$ at time-step $t$ is indicated by setting $L_{i,t}=1$, otherwise $L_{i,t} \text{ is set to }0$. We use the notation $E_t \subset [n]$ to denote $\{i: L_{i,t}=1\}$. We call $E^e_t$ the effective error before the error correction phase $t$, which includes $E_t$ and the uncorrected errors from previous phases that have been carried forward.

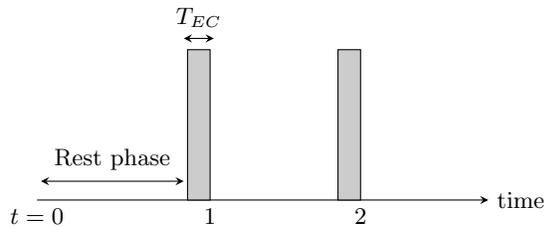
\begin{figure}
\label{fig:model}
\begin{tikzpicture}[>=stealth]

  \draw[->] (0,-1) -- (6,-1) node[right] {time};
  \draw (0,-1) node[below] {$t=0$};

  \def\tone{2}
  \def\ttwo{4}
  \draw[fill=gray!40] (\tone,-1) rectangle ++(0.3,2);
  \draw[fill=gray!40] (\ttwo,-1) rectangle ++(0.3,2);

  \draw (\tone+0.3,-1) node[below] {$1$};
  \draw (\ttwo+0.3,-1) node[below] {$2$};

  \draw[<->] (0.05,-0.75) -- (1.95,-0.75)
    node[midway,above=1pt] {Rest phase};
  \draw[<->] (1.95,1.15) -- (2.35,1.15)
    node[midway,above=1pt] {$T_{{EC}}$};
  
  \end{tikzpicture}
  \caption{Illustration of the quantum memory model. The labeled discrete times correspond to the end of a periodic phase. Each of these phases consists of a rest phase and an error correction phase. Qubits can decohere during the rest phase, it is hence  followed by a $T_{EC}$-long error-correction phase. The error-correction phase includes syndrome extraction as well as correction.}
  \end{figure}

\subsubsection*{Long-range Correlated Qubit Errors}

Long-range correlations between errors can be spatial and temporal. In phase $t$, there are correlations between $\{L_{i,t}: i \in [n]\}$, and there are correlations also between phases, i.e., between $\{L_{i,t}: i \in [n]\}$ and $\{L_{i,t'}: i \in [n]\}$ or, in other words, between $E_t$ and $E_{t'}$. 

For almost all systems in equilibrium with the environment, errors are stationary in time and space, i.e., $\PR(L_{i,t}=1)$ is the same for all $t$ and $i$ and we denote it by $\bar{p}$. It is known from the probability literature \cite{KunschGK1995} that a hidden Markov random field model can approximate any stationary distribution with arbitrarily high precision. Motivated by this fact and the analytical tractability of hidden random fields, a generalized hidden Markov random field model was proposed in \cite{BagewadiC2024memory} to model long-range correlated errors.  This error model was proven to be broader than the well known pairwise Hamiltonian model for long-range correlation  \cite{TerhalB2005,AliferisGP2005,AharonovKP2006}, and to include long-range correlations not captured by the said model. In this work, we study fault-tolerance using qLDPC under this generalized hidden Markov random field model.  

In the system, $\{L_{i,t}: i \in [n]\}$ are the manifest errors, which are caused by underlying (hidden) physical scenarios in the vicinity of the qubit locations. The physical scenario in the vicinity of a qubit is dictated by the local temperature, local circuit disturbances, local effects of external fields and radiations \cite{McEwenFA2022,Google2023}, etc. The hidden parameters at time $t$ are denoted by $\{J_{i,t}: i \in [n]\}$. Each $J_{i,t}$ is a vector of potentially multiple local physical parameters. We assume $J_{i,t}$ to be from a countable set, possibly infinite.

The probability of having an error at location $i$, $\PR[L_{i,t}=1]$, is potentially affected by multiple of these hidden parameters, and these hidden parameters are also highly correlated with each other. We use the following generalized hidden Markov random field model from \cite{BagewadiC2024memory} to capture long-range correlation.

\begin{align}
\PR(\{L_{i,t}: i \in [n]\}) & = \PR(\{J_{k,t}: k \in [n]\})  \nonumber \\
& ~~~~~~ \times \prod_i \PR(L_{i,t}|\{J_{k,t}: k \in [n]\}) \mbox{,   and   } \nonumber \\
\PR(\{J_{k,t}: k \in [n]\}) & = \PR(J_{1,t}) \prod_{k=1}^{n-1} \PR(J_{k+1,t}|J_{k,t}). \nonumber
\end{align}

For each $i \in [n]$, we define $g_i(\{J_{k,t}: k \in [n]\}):=\PR(L_{i,t}|\{J_{k,t}: k \in [n]\})$  such that $\EX[g_i(\{J_{k,t}: k \in [n]\})=\bar{p}$. We assume that each $g_i$ varies smoothly with changes in $\{J_{i,t}\}$. Formally, we assume $g_i$ to be $c_n$-Lipschitz with respect to the Hamming distance,  with $c_n =O(n^{-0.5-\epsilon_g})$ for  $0<\epsilon_g\le 0.5$. In \cite[Prop.~1]{BagewadiC2024memory}, an information theoretic converse was proven for a class of non-smooth $\{g_i\}$: despite the total correlation being $O(1)$ at any location, the memory retention time is $O(n^2)$. Hence, the assumption regarding the smoothness of $\{g_i\}$ seems unavoidable for achieving fault-tolerance over a long duration (high polynomial or super-polynomial).

Furthermore, the assumption on $\{g_i\}$ is a generalization of a condition that would likely arise in quantum chips when they scale. A major reason for the success of classical chips was that the area of the chips did not scale with the number of bits. This slow scaling of the area is also desired for quantum chips for them to be useful in practice. Such scaling would essentially require a linear number of qubits to be within a short physical distance, in turn causing the probability of error of a given qubit to be affected by the local environments of a linear number of qubits. However, $g_i$ being a measure of probability, can be $1$ at most. The individual impact of each of these $\Theta(n)$ neighboring locations would therefore have to be $\Theta(n^{-1})$. This scenario is equivalent to $g_i$  being $c_n$-Lipschitz with $c_n =O(n^{-1})$. Thus, the scenario we consider, i.e., $c_n=O(n^{-0.5-\epsilon_g})$ for $0<\epsilon_g\le 0.5$, is a more general, relaxed condition and would also be true advanced technologies in the future with improved insulation between qubits on a chip.

\subsubsection*{Metrics for Long-range Dependence}

For most physical systems, the physical parameters are stationary in a stochastic sense. Hence, it is natural to assume that $\{J_{i,t}\}$ are stationary in time, which in turn would imply that $\{L_{i,t}\}$ are stationary in time. However, we do not assume any temporal correlation decay or temporal Markovian property  for the hidden and manifest variables, $\{J_{i,t}\}$ and $\{L_{i,t}\}$.

Next, we discuss the notion of correlation in the study of long-range correlation. In the pioneering series of works on long-range correlation \cite{TerhalB2005,AliferisGP2005,AharonovKP2006}, the model involved pairwise terms $H_{ij}$, for qubit locations $i$ and $j$, in the system and bath joint Hamiltonian. The condition under which a positive fault-tolerance threshold was proved in \cite{AharonovKP2006} is: for each $i \in [n]$, $\sum_{j} ||H_{ij}||=O(1)$. 

For characterizing how the interactions or correlations between errors should decay with distance to have fault-tolerance ,this condition was simplified  to: $||H_{ij}||$ decaying faster than $|i-j|^{-\text{dim}}$. However, $\sum_{j} ||H_{ij}||=O(1)$ is a broader condition, since it can be true despite the decay of $||H_{ij}||$ being slower than $|i-j|^{-\text{dim}}$. 

To quantify the dependence between two random variables $Y_1$ and $Y_2$, a well-known metric in probability theory, computer science, and statistical physics \cite{BoucheronLM2013} is the total variation distance between their joint distribution and the product of their marginal distributions. Here, we denote that by $\text{Dep}(Y_1,Y_2)$. When this metric is close to $0$, i.e., $Y_1$ and $Y_2$ have low dependence, and $Y_1, Y_2$ take values in a finite set, their covariance, which we denote by $\text{cov}(Y_i,Y_j)$, is a Lipschitz continuous function with respect to this metric and hence can be bounded by a constant times the metric. Thus,  when $\{\text{Dep}(Y_i,Y_j)\}$ are close to $0$, the total accumulated covariance at $i$, $C(Y_i):=\sum_j \text{cov}(Y_i,Y_j)$ is  equivalent to $\text{Dep}(Y_i)=\sum_j \text{Dep}(Y_i,Y_j)$, the total dependence of $Y_i$ on $[n]$, in the order sense, i.e., $C(Y_i)=\Theta(\text{Dep}(Y_i))$.

The total number of errors $|E|$, also given by $\sum_i L_i$, has the mean $\bar{p}\cdot n=\Theta(n)$. Its variance, given by $\sum_i C(L_i)$, is $O(n)$, when $C(L_i)=O(1)$. It is implicit that fault-tolerance is not possible without having all but a vanishing fraction of $\{\text{Dep}(L_i,L_j)\}$ and $\{\text{cov}(L_i,L_j)\}$ close to $0$, otherwise the variance of $|E|$ would be $\Theta(n^2)$ and error correction would be impossible in an information theoretic sense.

In the pairwise joint Hamiltonian model, $||H_{ij}||$ captures the dependence between error at $i$ and $j$, and thus is equivalent to  $\text{Dep}(L_i,L_j)$. So, the quantity $\sum_{j} ||H_{ij}||$ is the total influence or correlation of all qubits on the qubit at location $i$ and is equivalent to the total dependence of qubit $i$ on $[n]$, $\text{Dep}(L_i)=\sum_j \text{Dep}(L_i,L_j)$, and the total accumulated covariance at $i$, $C(L_i)=\sum_j \text{cov}(L_i,L_j)$.  

 Thus, in this sense, the condition derived in \cite{TerhalB2005,AliferisGP2005,AharonovKP2006} for the dependence or correlation structure  to  achieve fault-tolerance can also be seen in three equivalent ways: (i) $\text{var}(|E|)$ is $O(n)$ i.e., $\text{var}(\sum_i L_i)$ is $O(n)$, (ii) $\text{Dep}(L_i)$ is $O(1)$, and (iii) $C(L_i)$ is $O(1)$.

In the generalized hidden MRF model \cite{BagewadiC2024memory}, the metric used to capture the total influence or correlation with all qubits on qubit $i$, denoted by $V(J_i)$, is defined as
$$\sum_j \sup_{a,b} \frac{1}{2} \sum_h |\PR(J_{j}=h|J_{i}=a) - \PR(J_{j}=h|J_{i}=b)|.$$
Each term in the summation corresponding to $j$ is equivalent to $\text{Dep}(J_i,J_j)$, the total variation distance between the joint and the product of the marginals of $J_i$ and $J_j$. Thus, $V(J_i)$ is equivalent to $\text{Dep}(J_i)=\sum_j \text{Dep}(J_i,J_j)$ in the order sense. 

When $\{g_i\}$ are smooth, fault-tolerance using linear distance Tanner codes is achieved when the total correlation at qubit $i$ with all other qubits, $V(J_i)$, is $o(n^{0.5})$  \cite{BagewadiC2024memory}. This is a significant  improvement over the previous condition on variance and total correlation in the pairwise Hamiltonian model, $C(L_i)=O(n)$, if $V(J_i)$ is equivalent to $\text{Dep}(L_i)$ and $C(L_i)$ in the order sense.

By the law of total covariance and the distribution of $\PR(\{L_{i,t}: i \in [n] \})$ defined above, the covariance between $L_i$ and $L_j$ is the same as the  covariance between $g_i(\{J_k: k \in [n]\})$ and $g_j(\{J_k: k \in [n]\})$. Consider the case where $g_i(\{J_k: k \in [n]\})=\sum_{k} c_{i,k} J_k$ for $c_{i,k} \ge 0$, $J_k \ge 0$ and $J_k$ belong to a finite alphabet. We need to have $\sum_{k} c_{i,k}=O(1)$ since $g_i$ are probabilities. Then, the covariance of $L_i$ and $L_j$ can be written as 
$$ \text{cov}(L_i, L_{j}) = \sum_{k,k'} \text{cov}(J_k, J_{k'}) c_{i,k} c_{j,k'}.$$

Thus, 
\begin{align*}
C(L_i) & =\sum_j \sum_{k,k'} \text{cov}(J_k, J_{k'}) c_{i,k} c_{j,k'} \\
& = \Theta(1) \cdot \sum_{k,k'} \text{cov}(J_k, J_{k'}) c_{i,k} \\
& = \Theta(1) \cdot \sum_k C(J_k) c_{i,k} =\Theta(C(J_i)),
\end{align*}
where the last step is due to spatial symmetry and the fact that $\sum_{k} c_{i,k}=O(1)$. Since $C(J_i)$ is equivalent to $\text{Dep}(J_i)$, which is equivalent to $V(J_i)$, we see that $C(L_i)$ and $V(L_i)$ are equivalent to $V(J_i)$ in the order sense. 

Hence, the condition under which fault-tolerance is proved for linear distance Tanner codes in \cite{BagewadiC2024memory} is indeed equivalent to $C(L_i)$ and $V(L_i)$ being $o(n^{0.5})$ and thus is an order-wise improvement over $C(L_i)$ and $V(L_i)$ being $O(1)$ in \cite{Terhal2015,AliferisGP2005,AharonovKP2006}. 

Since $V(J_i)$ is equivalent to $V(L_i)$ and $C(L_i)$, we use $V(J_i)$ as our metric for the total correlation of qubit $i$ with all qubits,  as in \cite{BagewadiC2024memory}, and prove fault-tolerance for  $V(J_i)=o(n^{\epsilon_g})$. As discussed above, in the likely case of large quantum chips, $\epsilon_g$ is $0.5$.

From a physical perspective, $\{g_i\}$ are continuous non-decreasing functions of $\sum_{k} c_{i,k} J_k$, like the sigmoid function. Since in fault-tolerance, we are interested in the case when probabilities of errors $g_i$ and $g_j$ are close to $0$ (low noise), in that range $g_i$ and $g_j$ can be lower and upper bounded by two lines through the origin. Hence, for small error rate, the above order-wise equivalence shown between $C(J_i)$ (or $V(J_i)$) and $C(L_i)$ (or $V(L_i)$) for linear $\{g_i\}$ also holds  for nonlinear $\{g_i\}$.

\subsection*{qLDPC for fault-tolerant memory}

Quantum Low Density Parity Check(qLDPC) codes are a well-studied group of stabilizer codes. Like stabilizer codes \cite{nielsen2002quantum}, they are constructed by combining two classical codes, one to correct Pauli $X$ flips and the other to correct Pauli $Z$ flips. The corresponding parity check matrices $H_X$ and $H_Z$ are such that the $Z$ code is the dual of the $X$ code. For constructing qLDPC codes, we need the two classical codes to be LDPC codes. The low density parity check property of a classical code means that its parity check matrix is sparse, where sparsity refers to the weights of the rows and columns being constant, i.e., not scaling with the size of the codeword. 

An effective way to construct classical low density parity check codes is to use expander codes \cite{LeverrierTZ2015qexpander,FawziGL_STOC2018}. Expander codes are constructed using bipartite expander graphs that serve as Tanner graphs for the code. The left nodes are code bits and the right nodes are parity checks. 

In this work, we use a particular class of quantum LDPC (qLDPC) codes, the quantum expander codes introduced in \cite{LeverrierTZ2015qexpander}. This code has a distance of $\Omega(\sqrt{n})$, as shown in \cite{FawziGL_STOC2018, Grospellierthesis}.  In \cite{Grospellierthesis}, these codes were shown to have a small set flip decoding algorithm with locality property. It was also shown to have fault-tolerance against errors distributed according to the local stochastic noise model that has a fast (exponential) correlation decay \cite{TerhalB2005}.  

In this paper, our main contribution is to show that the same codes can offer fault-tolerance against long-range correlated errors with significantly slower correlation decay  (polynomial with degree $1$) . Thus, this work gives an affirmative answer to whether constant overhead fault-tolerance is possible against long-range correlated errors. 

Our proof techniques extend to other qLDPC codes that have a polynomial distance and share a similar locality property of the decoding algorithm. There are recent qLDPC codes with these properties \cite{TamiyaKY2024polylog,YamasakiK2024time}.  However, in this paper, for the sake of simplicity in presentation, we limit our discussion to the $\Omega(\sqrt{n})$-distance expander qLDPC codes that were studied in \cite{LeverrierTZ2015qexpander,FawziGL_STOC2018,Grospellierthesis}.

As argued in previous work on fault-tolerance \cite{FawziGL_STOC2018,Grospellierthesis}, it is enough to prove fault-tolerance using expander qLDPC codes assuming only the possibility of Pauli-$X$ errors , as the case of Pauli-$Z$ errors follows similarly. Hence, for simplicity in presentation, the variables $\{L_{i,t}\}$, $E_t$ and $E_t^e$ can be assumed to correspond only to Pauli-$X$ errors. 

As discussed above, the fault-tolerant memory model considered here is standard: a state is stored after encoding it using the aforementioned expander qLDPC codes, and a polynomial time decoding algorithm is run in each error-correction phase. We employ the small-set-flip decoding algorithm \cite[Alg.~2]{FawziGL_STOC2018}.

\subsection*{Noise threshold against long-range correlated errors}

In this section, we present our main result on the existence of a noise threshold for memory affected by long-range correlated errors. In that regard, we first introduce the essential parameters that describe the expander qLDPC code, the decoding algorithms, and an appropriate metric for spatial correlation among errors.

The expander qLDPC code used here is the symmetric version of the code in \cite{FawziGL_STOC2018}, where the chosen bipartite expander graph has symmetric expansion coefficients  for the left and right sets of vertices, i.e., $\delta_A=\delta_B$  and $\gamma_A=\gamma_B$, as was the choice in \cite{Grospellierthesis}. However, the left and right degrees ($d_A$ and $d_B$, or $d_V$ and $d_C$, in \cite{FawziGL_STOC2018} and
\cite{Grospellierthesis} respectively) are different, and their ratio is denoted by $r$. The rate of the expander qLDPC code has a lower bound of $\frac{(1-r)^2}{1+r^2}$, which follows from \cite{TillichZ2013qLDPC,LeverrierTZ2015qexpander, FawziGL_STOC2018}.

As mentioned above, in each error correction phase, the linear-time small set flip decoding algorithm (Alg.~2 in \cite{FawziGL_STOC2018}) is run to periodically correct the errors that accumulate over time. The number of adversarial errors that can be corrected by the original small set flip decoding algorithm has a lower bound of $c' \sqrt{n}$, where \newline $c'=\frac{r (1-8\delta_A)}{4+2r(1-8\delta_A)} \gamma_A \frac{r^2}{\sqrt{1+r^2}}$, by Proposition~4.16 in \cite{Grospellierthesis}.  

\begin{definition}\label{def : Adjacency graph} Adjacency graph $\mathcal{G}$. \cite{FawziGL_STOC2018}.

The set of qubits is represented as a graph $\mathcal{G} = (\{1,\dots,n \}, \mathcal{E})$ called an \textit{adjacency graph}  with vertices representing the qubits of the code and edges representing the presence of a stabilizer generator acting on both qubits. 

\end{definition}

The graph $\mathcal{G}$ in our case is the same as  in \cite{FawziGL_STOC2018}. and has a $O(1)$ degree denoted by $d_\mathcal{G}$. 

\begin{proposition}
\label{prop:qLDPClongNoSynd}
Suppose a state is stored using a constant rate expander qLDPC code from \cite{LeverrierTZ2015qexpander,FawziGL_STOC2018,Grospellierthesis} with parameters $(d_A, d_B)$ and $\delta_A=\delta_B<\frac{1}{32}$, and it is periodically error corrected using the small-set-flip decoder \cite[Alg~2]{FawziGL_STOC2018}. 
Suppose that the memory has been subjected to the long-range correlated error discussed above where the average error rate per epoch is $\bar{p}$, $\{g_i: i \in [n]\}$ are $c_n$-Lipschitz with respect to the Hamming distance with $c_n=O\left( \frac{1}{n^{0.5+\epsilon_g}}\right)$, and the total correlation at $i$, $V(J_i)$, scales as $o(n^{\gamma})$ for $0\le \gamma<\epsilon_g\le 0.5$. Then, for $C_{\mc{G}}=\ln\left((d_{\mc{G}}-1) (1+\frac{1}{d_{\mc{G}}-2})^{d_{\mc{G}}-2}\right)$, there exists  $p_{th} = \exp\left(- \frac{808 \cdot (1+\frac{3 r}{8})\cdot C_{\mc{G}}}{300 \cdot r}\right)$ such that for all $\bar{p}<p_{th}$, if the state is retrieved at a time $T \le \exp(O(n^{\epsilon_g-\gamma}))$, it would have  fidelity $1-o(1)$.
\end{proposition}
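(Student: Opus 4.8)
The plan is to show that, with probability $1-o(1)$, every error-correction phase up to time $T$ returns the true effective error up to a stabilizer, so the logical information is never corrupted; fidelity $1-o(1)$ then follows. For phase $t$ let $F_t$ be the event that the decoder returns an operator not in the stabilizer coset of the true effective error $E_t^e$. On $\overline{F_1}\cap\cdots\cap\overline{F_{t-1}}$ nothing is carried forward, so $E_t^e=E_t$, and hence $F_t\cap\overline{F_1}\cap\cdots\cap\overline{F_{t-1}}\subseteq\{\text{the decoder fails on the fresh error }E_t\}$; a union bound over phases gives $\PR\big(\bigcup_{t\le T}F_t\big)\le T\cdot\sup_t\PR(\text{decoder fails on }E_t)$. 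Since $E_t$ is a function of the hidden field $\{J_{k,t}:k\in[n]\}$ and of independent coin flips, this supremum depends only on the time-invariant stationary marginal of the hidden field, so the unrestricted temporal correlations play no role. It therefore suffices to bound the single-phase failure probability by $o(1/T)=o\big(\exp(-c\,n^{\epsilon_g-\gamma})\big)$ for the relevant constant $c$.

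For the single-phase bound I would condition on the hidden field $J=\{J_{k,t}:k\in[n]\}$. Given $J$, the model makes $\{L_{i,t}\}_i$ independent with $\PR(L_{i,t}=1\mid J)=g_i(J)$; in particular, on the event $\{g_i(J)<p_{th}\ \forall i\}$ the conditional error distribution is local stochastic with rate below $p_{th}$. On that event I would invoke the small-set-flip analysis of \cite{FawziGL_STOC2018,Grospellierthesis}: the decoder fails only when $E_t$ contains a subset that is connected in $\mathcal{G}$, has size $\Theta(\sqrt n)$ (tied to the adversarial correction radius $c'\sqrt n$ from Proposition~4.16 of \cite{Grospellierthesis}), and is heavily errored. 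Counting the connected size-$m$ subsets of $\mathcal{G}$ by the subtree bound $n\,e^{C_{\mathcal{G}}m}$, and bounding the probability that a fixed such subset is heavily errored by $(\text{const}\cdot p_{th}^{\,\text{const}})^{m}$, the resulting geometric series is summable precisely when $p_{th}$ is below a value of the form $\exp(-\Theta(C_{\mathcal{G}}))$; carrying the expansion coefficients ($\delta_A=\delta_B$, $\gamma_A=\gamma_B$), the degrees $d_A,d_B$ and their ratio $r$, the constant $c'$, and the cluster combinatorics through this computation is what produces the stated $p_{th}=\exp\big(-\tfrac{808(1+3r/8)C_{\mathcal{G}}}{300\,r}\big)$, with conditional failure probability $e^{-\Omega(\sqrt n)}$. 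The key point is that the geometric decay beating the entropy factor $e^{C_{\mathcal{G}}m}$ comes from the conditional \emph{independence} of the errors given $J$; a direct union bound over subsets of the unconditioned, correlated distribution would not be summable.

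It remains to bound $\PR(\exists i:\ g_i(J)\ge p_{th})$. Here I would use a concentration inequality for functions of a Markov chain (as in \cite{BagewadiC2024memory}): since $g_i$ is $c_n$-Lipschitz in Hamming distance and the chain $J_1,\dots,J_n$ has influence sums $V(J_i)=o(n^{\gamma})$, the Doob-martingale variance proxy of $g_i(J)$ is $O\big(n\,c_n^{2}\,V^{2}\big)$ with $V=\max_i V(J_i)$, which is $O(n^{2\gamma-2\epsilon_g})=o(1)$; hence $\PR(g_i(J)\ge p_{th})\le\exp\big(-\Omega((p_{th}-\bar p)^{2}\,n^{2(\epsilon_g-\gamma)})\big)$ for any fixed $\bar p<p_{th}$. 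A union bound over the $n$ locations absorbs the polynomial prefactor, so $\PR(\exists i:\ g_i(J)\ge p_{th})=e^{-\Omega(n^{2(\epsilon_g-\gamma)})}$. Combining the two contributions, the single-phase failure probability is at most $e^{-\Omega(\sqrt n)}+e^{-\Omega(n^{2(\epsilon_g-\gamma)})}=e^{-\Omega(n^{\min\{1/2,\,2(\epsilon_g-\gamma)\}})}$; since $\min\{1/2,2(\epsilon_g-\gamma)\}>\epsilon_g-\gamma$ for $0\le\gamma<\epsilon_g\le\tfrac12$ (equality only at $\epsilon_g=\tfrac12,\gamma=0$, handled by taking the constant in $\exp(O(\cdot))$ small), multiplying by $T\le\exp(O(n^{\epsilon_g-\gamma}))$ still leaves $o(1)$, giving fidelity $1-o(1)$.

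I expect the main obstacle to be the explicit constant in $p_{th}$: conceptually the proof is the two-step ``condition on $J$, then concentrate $g_i(J)$ about $\bar p$'' scheme above, but pinning down $\tfrac{808(1+3r/8)}{300\,r}$ requires re-deriving the small-set-flip failure criterion with every expander parameter made quantitative — the adversarial radius $c'$ from Proposition~4.16 of \cite{Grospellierthesis}, the subtree-count exponent $C_{\mathcal{G}}$, and the heavy-cluster combinatorics — rather than merely order-wise as in the purely local-stochastic analysis.
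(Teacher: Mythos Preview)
Your two-step scheme---condition on the hidden field $J$ to recover conditional independence, concentrate each $g_i(J)$ about $\bar p$, then run the connected-cluster union bound against the subtree count $e^{C_{\mathcal G}s}$, and iterate over phases via ``if all previous phases succeed then $E_t^e=E_t$''---is exactly the paper's route (packaged there as Theorem~\ref{thm: general Threshold}, Theorem~\ref{thm: maxconn result}, Theorem~\ref{thm:finalFormRecursion}, and Lemmas~\ref{lem:AJ}--\ref{lem:ChazotteWithCn}). Two small corrections worth noting: first, the paper conditions on $A_J^t=\{g_i(J)\le \bar p+\tfrac{1}{\log n}\ \forall i,\forall u\le t\}$ rather than on $\{g_i(J)<p_{th}\}$, which matters because your conditioning allows conditional rates arbitrarily close to $p_{th}$, at which the KL term no longer beats $C_{\mathcal G}$ and the geometric series fails to decay---the $1/\log n$ window keeps the conditional rate at $\bar p+o(1)$ while still being wide enough for the Chazottes/Markov concentration. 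Second, the explicit constant in $p_{th}$ does \emph{not} come from the adversarial radius $c'$ (nor from $\gamma_A$); it comes solely from the $\alpha$-subset parameter $\alpha=\dfrac{\tfrac{r}{2}(1-8\delta_A)}{\tfrac{r}{2}(1-8\delta_A)+1}$ via $p_{th}=\exp\!\big(-\tfrac{101}{100}\,C_{\mathcal G}/\alpha\big)$, and plugging in $\delta_A<\tfrac{1}{32}$ gives $\tfrac{101}{100\alpha}<\tfrac{808(1+3r/8)}{300\,r}$; the constant $c'$ enters only the exponent $\exp(-\Omega(\sqrt n))$ of the conditional failure probability, not the threshold itself.
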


For the same code parameters used in \cite{FawziGL_STOC2018} for numerical evaluations, the threshold value here is similar: $10^{-16}$--$10^{-15}$. Thus, the noise threshold above for long-range correlated errors is comparable to that of local stochastic and i.i.d. errors. Note that  $\gamma$ has to be less than $\epsilon_g$ in this proposition and $\epsilon_g$ bounded by $0.5$ to ensure $\bar{p}=\Omega(1)$ . Hence, the slowest correlation decay and the fastest growth of total correlation at a location, for which the proposition ensures a positive threshold for a super-polynomial memory retention time, are $\frac{1}{\sqrt{n}}$ and $\sqrt{n}$ respectively. 

The noise threshold in Proposition~\ref{prop:qLDPClongNoSynd} depends directly on the rate of the chosen code, which is known to have a lower bound of $\frac{(1-r)^2}{1+r^2}$, from \cite{TillichZ2013qLDPC,LeverrierTZ2015qexpander, FawziGL_STOC2018}. If this constant code rate is small, i.e., $r$ is close to $1$, \\ $p_{th} \gtrapprox \exp\left(- \frac{808 \cdot C_{\mc{G}}}{300}\right) ~\cdot~ \exp\left(- \frac{1.01 \cdot C_{\mc{G}}}{1-\sqrt{2 \cdot \text{code rate}}}\right) $. 

Since the code rate decreases with increasing $r$, the noise threshold $p_{th}$ decreases with higher code rates. This implies the existence of a trade-off between the noise threshold and the constant space overhead. This is not unexpected, since even for classical codes, codes with higher rates are less suited to correct more frequent errors.

\subsection*{Proof of Proposition~\ref{prop:qLDPClongNoSynd}}

To prove the main result in Proposition~\ref{prop:qLDPClongNoSynd}, we use the following general theorem, which relates the noise threshold to parameters of the chosen code and the constant-time decoding algorithm.

\begin{theorem}\label{thm: general Threshold} For a 1-dimensional arrangement of qubits, for any time $t \le \exp(O(n^{\epsilon_g-\gamma}))$, the memory state can be accurately retrieved at $t$ with probability $(1 - o(1))$ if   $\bar{p}<\exp(-\frac{101 \cdot C_{\mc{G}}}{100 \cdot \alpha})$ and  the long-range correlated error satisfies the conditions in Proposition~\ref{prop:qLDPClongNoSynd}, where $C_{\mc{G}}=\ln\left((d_{\mc{G}}-1) (1+\frac{1}{d_{\mc{G}}-2})^{d_{\mc{G}}-2}\right)$.  Here 
$\alpha=\frac{\frac{r}{2}(1-8\delta_A)}{\frac{r}{2}(1-8\delta_A)+1}$.

\end{theorem}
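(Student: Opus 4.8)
\emph{Proof strategy.} The plan is to reduce the $t$-step retention claim to a single-epoch estimate via a union bound over epochs, and to handle a single epoch by combining three ingredients: (i) a concentration argument showing that, conditioned on the hidden chain $\{J_{k,s}\}$, the error bits $\{L_{i,s}\}$ are independent Bernoulli variables whose parameters $g_i$ all lie below a fixed constant $q$ slightly above $\bar p$, except with overwhelming probability over the chain; (ii) a lattice-animal (percolation) bound in the bounded-degree adjacency graph $\mc{G}$, which controls the probability that $E_s$ contains a large (or sufficiently error-dense) connected region; and (iii) the locality guarantee of the small-set-flip decoder of \cite{FawziGL_STOC2018,Grospellierthesis}, which states that an error whose connected structure in $\mc{G}$ is ``locally correctable'' -- no connected subregion carrying an excessive error density, the admissible density being governed by $\alpha=\frac{(r/2)(1-8\delta_A)}{(r/2)(1-8\delta_A)+1}$ and translating into a correctable component scale of $\Theta(\sqrt n)$ -- is corrected completely, leaving a residual that is trivial modulo stabilizers. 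The induction is then immediate: if at every epoch $s\le t$ the new error $E_s$ is locally correctable, the residual carried forward is always empty, so $E^e_s=E_s$ throughout and the logical content of the state is never touched.

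For a single epoch, condition on $\{J_{k,s}\}$. Since each $g_i$ is $c_n$-Lipschitz in Hamming distance with $c_n=O(n^{-1/2-\epsilon_g})$, $\EX[g_i]=\bar p$, and the hidden chain is a (one-dimensional) Markov chain of bounded total correlation $V(J_i)=o(n^{\gamma})$, a concentration inequality for Lipschitz functions of such a chain -- not relying on fast mixing, only on the total-correlation bound -- gives $\PR(g_i>q)\le\exp(-\Omega(n^{2\epsilon_g-\gamma}))$ for any fixed $q>\bar p$, hence $\PR(\exists i:g_i>q)\le n\exp(-\Omega(n^{2\epsilon_g-\gamma}))$. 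On the complementary event the conditional law of $E_s$ is a product of Bernoullis with parameters $\le q$, so for any connected $S\subseteq\mc{G}$ with $|S|=m$ one has $\PR(S\subseteq E_s\mid\text{good})\le q^{m}$, and more generally the probability that $S$ carries at least an $\alpha$-fraction of errors is at most $\binom{m}{\lceil\alpha m\rceil}q^{\alpha m}$. The number of connected subsets of $\mc{G}$ of size $m$ through a fixed vertex is at most $e^{C_{\mc{G}}m}$ with $C_{\mc{G}}=\ln((d_{\mc{G}}-1)(1+\tfrac{1}{d_{\mc{G}}-2})^{d_{\mc{G}}-2})$. Taking $m=\Theta(\sqrt n)$ at the decoder's correctability scale and feeding in the admissible density $\alpha$, the union bound over connected regions yields $\PR(E_s\text{ not locally correctable}\mid\text{good})\le n\,(e^{C_{\mc{G}}}q^{\alpha})^{\Theta(\sqrt n)}$, which is $\exp(-\Theta(\sqrt n))$ precisely when $e^{C_{\mc{G}}}q^{\alpha}<1$; since $q$ may be taken arbitrarily close to $\bar p$, this holds once $\bar p<\exp(-\tfrac{101\,C_{\mc{G}}}{100\,\alpha})$, the slack factor $101/100$ absorbing the combinatorial prefactors and the gap between $q$ and $\bar p$. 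Thus the single-epoch failure probability is at most $n\exp(-\Omega(n^{2\epsilon_g-\gamma}))+\exp(-\Theta(\sqrt n))$.

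Finally, because the error distribution is stationary in time, a union bound over the $T\le\exp(O(n^{\epsilon_g-\gamma}))$ epochs -- valid irrespective of any temporal correlations between epochs, which never enter -- bounds the total failure probability by $T\cdot\bigl(n\exp(-\Omega(n^{2\epsilon_g-\gamma}))+\exp(-\Theta(\sqrt n))\bigr)$, which is $o(1)$ since $\epsilon_g-\gamma<\min(1/2,2\epsilon_g-\gamma)$ under the hypothesis $0\le\gamma<\epsilon_g\le1/2$. On this $(1-o(1))$-probability event the residual error is empty at every epoch up to $t$, a final decoding round at time $t$ leaves a trivial logical operator, and the stored state is retrieved with fidelity $1-o(1)$.

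\emph{Main obstacle.} The delicate part is step (i) together with the matching of constants in (ii)--(iii). The concentration bound for $g_i$ must be obtained from $c_n$-Lipschitzness plus the weak dependence control $V(J_i)=o(n^{\gamma})$ alone -- no geometric mixing is assumed -- so one has to quantify precisely how these yield sub-Gaussian fluctuations of $g_i$ of size $o(1)$ at the stated rate; and one must then extract from the small-set-flip analysis of \cite{FawziGL_STOC2018,Grospellierthesis} a local correctability criterion expressed in terms of the adjacency-graph connected-subset growth rate $e^{C_{\mc{G}}}$ and the reduction parameter $\alpha$, sharply enough that the threshold emerges in the claimed closed form $\exp(-\tfrac{101\,C_{\mc{G}}}{100\,\alpha})$ rather than as a weaker implicit constant.
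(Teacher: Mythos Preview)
Your proposal is correct and follows the same route as the paper: condition on the good hidden-chain event where all $g_i\le q$ (the paper's event $A_J^t$, obtained via the Chazottes et al.\ concentration inequality for Lipschitz functions of a Markov chain), combine lattice-animal counting in $\mc{G}$ with a Chernoff bound on the conditionally independent Bernoullis to control $\text{MaxConn}_\alpha$, invoke the decoder's locality guarantee (Proposition~3.9 of \cite{FawziGL_STOC2018} together with the $c'\sqrt{n}$ bound from \cite{Grospellierthesis}), deduce inductively that $E_s^e=E_s$ at every step, and finish with a union bound over epochs. One minor arithmetic slip: the concentration rate for $g_i$ is $\exp(-\Omega(n^{2(\epsilon_g-\gamma)}))$ rather than $\exp(-\Omega(n^{2\epsilon_g-\gamma}))$, since the coupling-matrix norm $\|D\|\le V(J_i)=o(n^\gamma)$ enters \emph{squared} in the Chazottes bound; this does not affect your final union bound because $2(\epsilon_g-\gamma)>\epsilon_g-\gamma$ still holds.
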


This theorem states that using the small-set-flip decoder in each error correction phase, the probability of accurate retrieval asymptotically reaches 1 when the error probability is below the given threshold.
Next, we use this result to prove the main result in Proposition~\ref{prop:qLDPClongNoSynd}.

\begin{proof}[Proof of Proposition~\ref{prop:qLDPClongNoSynd}]
 Correct recovery with probability $1-o(1)$ ensures that fidelity is no less than $1-o(1)$. In Theorem~\ref{thm: general Threshold} when we use $\delta_A<\frac{1}{32}$ the result in Proposition~\ref{prop:qLDPClongNoSynd} follows. 
\end{proof}

To prove the general noise threshold result in Theorem~\ref{thm: general Threshold} we generalize the proof technique from \cite{FawziGL_STOC2018} to the proposed long-range error model. To do this, we must first establish some additional definitions.\newline \newline

\begin{definition}\label{def: maxconn} $\alpha$-subset and $\text{MaxConn}_{\alpha}$ \cite{FawziGL_STOC2018}

Consider a graph $\mathcal{G}$ with a bounded degree. With respect to $\mathcal{G}$, we can define 
 $\alpha \in (0;1]$ and let 
$X, Y \subseteq \mathcal{V}$. $X$ is an $\alpha$-subset of $Y$ 
if 
$$
|X \cap Y| \geq \alpha |X|.
$$ 
We also define the integer $\text{MaxConn}_{\alpha}(Y)$ as:
$$
\text{MaxConn}_{\alpha}(Y) 
= \max \{ |X| :  \begin{aligned}[t]
   &X \text{ is connected in } \mathcal{G} \\
   &\text{and is an $\alpha$-subset of } Y\}
\end{aligned} 
$$
\end{definition}

These quantities have direct implications for the correction capabilities of quantum expander codes, as is extensively shown in \cite{FawziGL_STOC2018}. 

\begin{theorem}\label{thm: maxconn result}

The error after running the small-set-flip decoder for $(t -1)$ time-steps is represented by $E_{t}^e$ (effective error at time $t)$. Then, for $t\le \exp(c_4 n^{b_4})$, under the same conditions for $\alpha$ and $\bar{p}$ as in Theorem~\ref{thm: general Threshold},

$$
\mathbb{P} \Bigl[ \text{MaxConn}_\alpha(E_{t}^e) > c' \sqrt{n} \Bigr] \leq \exp(-c_4 n^{b_4}),
$$
for $\alpha \ln \frac{1}{\bar{p}} \ge \frac{101}{100} C_{\mc{G}}$, where  $b_4 = \epsilon_g - \gamma$, $c_4=\Omega(1)$, and $C_{\mc{G}}=\ln\left((d_{\mc{G}}-1) (1+\frac{1}{d_{\mc{G}}-2})^{d_{\mc{G}}-2}\right)$.

\end{theorem}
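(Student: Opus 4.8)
The plan is to lift the decoder-percolation argument of \cite{FawziGL_STOC2018,Grospellierthesis} to the hidden-MRF error model by decoupling the spatial correlation through the hidden field, so that at every time step the conditionally independent fresh error can be treated with the i.i.d.-style FGL analysis, while the correlation is absorbed into a single low-probability event on the hidden variables. First I would reduce to a per-step statement. By the $\text{MaxConn}_\alpha$-refined correctness guarantee of the small-set-flip decoder (Prop.~4.16 of \cite{Grospellierthesis} together with the reduced-error analysis of \cite{FawziGL_STOC2018}), if at every step $t'<t$ the effective error presented to the decoder has $\text{MaxConn}_\alpha\le c'\sqrt n$, the decoder removes it up to a stabilizer, so nothing is carried forward and the effective error at step $t$ equals the fresh error $E_t$ of rest phase $t$. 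Hence $\{\text{MaxConn}_\alpha(E_t^e)>c'\sqrt n\}\subseteq\bigcup_{t'\le t}\{\text{MaxConn}_\alpha(E_{t'})>c'\sqrt n\}$, and it suffices to bound the single-step probability $\PR[\text{MaxConn}_\alpha(E_{t'})>c'\sqrt n]$ by $\exp(-\Omega(n^{b_4}))$ with a sufficiently large constant and then union bound over $t'\le t\le\exp(c_4 n^{b_4})$.

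For a fixed step I would condition on the hidden configuration $\mbf J_{t'}=\{J_{k,t'}:k\in[n]\}$, under which the indicators $\{L_{i,t'}\}$ are independent Bernoulli$(g_i(\mbf J_{t'}))$. Define the good event $\mc G_{t'}=\{\max_i g_i(\mbf J_{t'})\le\bar p+\delta\}$ for a small constant $\delta>0$. On $\mc G_{t'}$ the fresh error is stochastically dominated by an i.i.d.\ Bernoulli$(\bar p+\delta)$ field, so the FGL argument runs unchanged: a union over connected subsets $X$ of the bounded-degree adjacency graph $\mc G$ of size $m\ge c'\sqrt n$ --- of which there are at most $n\,e^{C_{\mc G}m}$ by the standard connected-subgraph count that produces the constant $C_{\mc G}$ --- against a Chernoff bound $\PR[|X\cap E_{t'}|\ge\alpha m]\le e^{-mD}$ with $D$ a KL-type rate close to $\alpha\ln\tfrac1{\bar p}$; summing the geometric series in $m$ gives $\PR[\text{MaxConn}_\alpha(E_{t'})>c'\sqrt n\mid\mc G_{t'}]\le e^{-\Omega(\sqrt n)}$ whenever $D>C_{\mc G}$, and the hypothesis $\alpha\ln\tfrac1{\bar p}\ge\tfrac{101}{100}C_{\mc G}$ is exactly the margin needed so that $D>C_{\mc G}$ survives the $\delta$-inflation of the rate and the $\binom{m}{\alpha m}$ entropy factor.

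It remains to bound $\PR[\mc G_{t'}^c]$. Each $g_i$ is $c_n$-Lipschitz in $\mbf J_{t'}$ with $c_n=O(n^{-1/2-\epsilon_g})$ and $\EX[g_i]=\bar p$, while $\mbf J_{t'}$ is a Markov chain with total correlation $V(J_i)=o(n^\gamma)$ at each site. Invoking the concentration inequality for Lipschitz functions of a weakly dependent hidden field from \cite{BagewadiC2024memory} --- a transportation/bounded-differences inequality in which the sub-Gaussian proxy is $c_n^2$ times a dependence factor controlled by $\{V(J_i)\}$ --- one obtains $\PR[g_i(\mbf J_{t'})>\bar p+\delta]\le\exp(-\Omega(n^{\epsilon_g-\gamma}))$ uniformly in $i$, so $\PR[\mc G_{t'}^c]\le n\exp(-\Omega(n^{\epsilon_g-\gamma}))$. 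Combining the two pieces, the per-step failure probability is $e^{-\Omega(\sqrt n)}+e^{-\Omega(n^{\epsilon_g-\gamma})}=e^{-\Omega(n^{\epsilon_g-\gamma})}$ since $\epsilon_g-\gamma\le\tfrac12$; taking $c_4$ below the exponent constant, the union bound over $t'\le t\le\exp(c_4 n^{\epsilon_g-\gamma})$ closes the argument with $b_4=\epsilon_g-\gamma$ and $c_4=\Omega(1)$.

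The main obstacle is the concentration step for $\PR[\mc G_{t'}^c]$: it is the only place where the long-range correlation is felt quantitatively, it is what pins down the exponent $b_4=\epsilon_g-\gamma$ (hence the super-polynomial retention horizon), and it demands the full strength of the generalized-hidden-MRF machinery, since the bare Lipschitz constant together with the weak-dependence budget $V(J_i)=o(n^\gamma)$ must be converted into a genuinely exponential tail rather than a mere variance bound. A secondary delicacy, specific to the $\sqrt n$-distance expander code as opposed to the linear-distance Tanner codes of \cite{BagewadiC2024memory}, is that the clusters capable of defeating the decoder now have size only $\Theta(\sqrt n)$, so the Chernoff contribution is merely $e^{-\Omega(\sqrt n)}$ instead of $e^{-\Omega(n)}$ and the union over connected clusters barely converges --- which is precisely why the explicit margin $\tfrac{101}{100}$ (equivalently $\bar p<p_{th}$) must be imposed in the statement.
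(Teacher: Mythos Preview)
Your proposal is correct and follows essentially the same approach as the paper: condition on a good event for the hidden field (all $g_i\le\bar p+\text{small}$), on which the fresh error is dominated by an i.i.d.\ Bernoulli field so the FGL connected-cluster union bound plus Chernoff applies; bound the bad event via a Lipschitz-function concentration inequality for the Markov hidden chain; and reduce $E_t^e$ to the fresh error $E_t$ by the inductive ``if $\text{MaxConn}_\alpha(E_{t'})\le c'\sqrt n$ at each prior step then nothing carries forward'' argument, then union bound over time. The paper organises these same ingredients slightly differently---it defines one global good event $A_J^t$ over all $u\le t$ and uses $\tfrac{1}{\log n}$ rather than a fixed $\delta$ as the inflation, and cites the Chazottes et al.\ transportation inequality directly rather than via \cite{BagewadiC2024memory}---but the substance and the resulting exponent $b_4=\epsilon_g-\gamma$ are identical.
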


This probability is directly tied to the number of errors that can be successfully corrected by our model, a property that will be elaborated on in the following proof.

\begin{proof}[Proof of Theorem \ref{thm: general Threshold}]
By Proposition~3.9 in \cite{FawziGL_STOC2018}, when $MaxConn_\alpha(E_{t}^e)$ is less than the number of adversarial errors that can be corrected by the small set flip algorithm, the errors in $E_{t}^e$ are corrected by the original small set flip decoder (Alg.~2 in \cite{FawziGL_STOC2018}). 

The number of adversarial errors that can be corrected by the original small set flip decoder is at least $c' \sqrt{n}$, where $c'=\frac{r (1-8\delta_A)}{4+2r(1-8\delta_A)} \gamma_A \frac{r^2}{\sqrt{1+r^2}}$, by Proposition~4.16 in \cite{Grospellierthesis}.

By Theorem~\ref{thm: maxconn result}, $MaxConn_\alpha(E_{t}^e)$ is less than  $c' \sqrt{n}$ with probability $\ge 1 - \exp(-c_4 n^{b_4}))$ for $t \le \exp(c_4 n^{b_4})$. Hence, for $t < \exp(c_4 n^{b_4})$, the state in the memory can be retrieved using the original small set flip decoder with probability $1-o(1)$.

From \cite{FawziGL_STOC2018}, for a particular $\delta_A$, the maximum possible value of $\alpha$ is $\frac{\frac{r}{2} (1-8\delta_A)}{(1-8\delta_A)\frac{r}{2}+1}$. 

\end{proof}

For proving Theorem~\ref{thm: maxconn result} we have to introduce certain quantities similar to those in \cite{FawziGL_STOC2018}.

\begin{definition}
For a graph $\mathcal{G}$, define $\mathcal{C}_s(\mathcal{G})$ as the set of connected sets of size $s$ in $\mathcal{G}$.

\begin{align}
&\text{For } v \in V,\, X \subseteq V,\, E \subseteq V \text{ and } s \in \mathbb{N}, \text{ define:}\\
&\mathcal{C}_s(v) = \{X \in \mathcal{C}_s(\mathcal{G}) : v \in X\},\text{connected sets containing $v$,}\\
&\partial X =  \{\text{ neighbors of } X\} \setminus X,\\
&A(E, \alpha, s, v) = \left\{ X \in \mathcal{C}_s(v) : |X \cap E| \geq \alpha|X|,\, \partial X \cap E = \varnothing \right\}\\
\end{align}
\end{definition}

Next, we state two lemmas and a theorem that we subsequently use to prove Theorem~\ref{thm: maxconn result}.

\begin{lemma}\label{lem: Csv bound}
For $s=\omega(\log n)$, 
$$
|\mathcal{C}_S(v)| \leq e^{C_\mathcal{G} s} ,
$$
where $C_\mathcal{G} =  \ln\left((d_{\mc{G}}-1) (1+\frac{1}{d_{\mc{G}}-2})^{d_{\mc{G}}-2}\right)$.
\end{lemma}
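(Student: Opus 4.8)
The plan is to bound $|\mathcal{C}_s(v)|$ by the number of rooted subtrees of $\mathcal{G}$ on $s$ vertices through $v$, to encode those subtrees by abstract ``slotted'' plane trees whose branching is controlled by the degree bound of $\mathcal{G}$, and then to read the exponential growth rate of the latter off the dominant singularity of a Fuss--Catalan-type generating function. First, to each $X\in\mathcal{C}_s(v)$ I would associate a canonical spanning tree $S(X)$ — say the BFS tree of $X$ rooted at $v$ with ties among neighbours broken by the fixed labelling of $V(\mathcal{G})=[n]$. Since $X$ is recovered as the vertex set of $S(X)$, the map $X\mapsto S(X)$ is injective, so $|\mathcal{C}_s(v)|$ is at most the number of subtrees of $\mathcal{G}$ on $s$ vertices containing $v$. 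Next, root such a subtree $S$ at $v$ and encode it: for the root record which of its $\le d_{\mathcal{G}}$ neighbours (in label order) are children, as a subset of $\{1,\dots,d_{\mathcal{G}}\}$; for every other vertex $u$ with parent $w$, record which of its $\le d_{\mathcal{G}}-1$ non-parent neighbours are children, as a subset of $\{1,\dots,d_{\mathcal{G}}-1\}$. Traversing in BFS order, this produces an abstract rooted plane tree in which the root has at most $d_{\mathcal{G}}$ ``slots'' and every other vertex at most $d_{\mathcal{G}}-1$ slots; given $\mathcal{G}$ and that the root is $v$, one reconstructs $S$ slot by slot, so the encoding is injective. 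Hence $|\mathcal{C}_s(v)|\le N_s$, where $N_s$ counts such abstract trees on $s$ vertices.

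To bound $N_s$ I would use generating functions. Let $f(x)=\sum_{s\ge1}a_s x^s$ count abstract subtrees rooted at a single slot; then $f(x)=x\,(1+f(x))^{d_{\mathcal{G}}-1}$ and $\sum_{s\ge1}N_s x^s = x\,(1+f(x))^{d_{\mathcal{G}}}$ (equivalently, by Lagrange inversion, $a_s=\tfrac{1}{s}\binom{(d_{\mathcal{G}}-1)s}{s-1}$). Writing the functional equation as $x=f/(1+f)^{d_{\mathcal{G}}-1}$, the standard analysis of such ``simple variety of trees'' generating functions produces a square-root branch point at $f^{\star}=\tfrac{1}{d_{\mathcal{G}}-2}$ (taking $d_{\mathcal{G}}\ge 3$, the only nontrivial case), hence a radius of convergence
$$\rho=\frac{f^{\star}}{(1+f^{\star})^{d_{\mathcal{G}}-1}}=\frac{(d_{\mathcal{G}}-2)^{d_{\mathcal{G}}-2}}{(d_{\mathcal{G}}-1)^{d_{\mathcal{G}}-1}}=\Big((d_{\mathcal{G}}-1)\big(1+\tfrac{1}{d_{\mathcal{G}}-2}\big)^{d_{\mathcal{G}}-2}\Big)^{-1}=e^{-C_{\mathcal{G}}}.$$
Because $f(\rho)=f^{\star}<\infty$, the series $\sum_s N_s x^s$ converges at $x=\rho$ as well, so its terms satisfy $N_s\rho^s\to0$ and therefore $N_s\le\rho^{-s}=e^{C_{\mathcal{G}}s}$ for all $s$ beyond some constant $s_0=s_0(d_{\mathcal{G}})$. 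Since $d_{\mathcal{G}}=O(1)$ and $s=\omega(\log n)$ grows without bound, this holds for all sufficiently large $n$, and combined with the two injections above it gives $|\mathcal{C}_s(v)|\le N_s\le e^{C_{\mathcal{G}}s}$.

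The delicate point is not the overall exponential shape but the precise constant: crude product or union bounds only yield a base like $e(d_{\mathcal{G}}-1)$, whereas the claimed $e^{C_{\mathcal{G}}}=(d_{\mathcal{G}}-1)\big(1+\tfrac{1}{d_{\mathcal{G}}-2}\big)^{d_{\mathcal{G}}-2}$ genuinely exploits that each non-root vertex forfeits one child-slot to its parent, together with the sharp binomial/singularity estimate. Obtaining the clean bound with no residual polynomial or multiplicative factor is exactly what forces the hypothesis $s=\omega(\log n)$: it lets the subexponential correction (the $s^{-3/2}$-type prefactor in the true asymptotics of $N_s$) be absorbed so that $N_s\le e^{C_{\mathcal{G}}s}$ holds outright.
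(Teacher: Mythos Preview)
Your argument is correct. The paper, however, takes a much shorter black-box route: it simply observes $|\mathcal{C}_s(v)|\le|\mathcal{C}_s(\mathcal{G})|$, quotes from \cite{FawziGL_STOC2018} the global bound $|\mathcal{C}_s(\mathcal{G})|\le n\big((d_{\mathcal{G}}-1)(1+\tfrac{1}{d_{\mathcal{G}}-2})^{d_{\mathcal{G}}-2}\big)^{s}$, and then uses the hypothesis $s=\omega(\log n)$ to absorb the extraneous factor $n$ into the exponent. What you have done is essentially unpack the cited \cite{FawziGL_STOC2018} bound itself---the BFS spanning-tree injection followed by the slotted-plane-tree / Fuss--Catalan count is the standard ``lattice animal'' argument that underlies that estimate---but you carry it out directly for the rooted count $|\mathcal{C}_s(v)|$, so you never pick up the multiplicative $n$. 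Consequently your proof is self-contained and, strictly speaking, only needs $s\ge s_0(d_{\mathcal{G}})$ for a constant $s_0$; the $s=\omega(\log n)$ hypothesis is genuinely needed in the paper's route (to kill the $n$) but is stronger than what your approach requires. Either way the downstream use of the lemma has a built-in $\tfrac{101}{100}$ slack in $C_{\mathcal{G}}$, so the distinction is immaterial for the rest of the paper.
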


\begin{definition}

\begin{align} A_{J}^t:= & \Big \{J_{i,u}, i \in [n], u \le t: \forall i \in [n] , \forall u \le t,  \nonumber \\ 
& g_i(\{J_{i,u}\}) \le \bar{p} + \frac{1}{\log n})\Big \}. 
\end{align}
\end{definition}

\begin{lemma}
\label{lem:AJ}
$\PR(A_J^t) \ge 1 - \exp(- c_4 n^{b_4})$ for $t \le \exp( c_4n^{b_4})$, where $b_4=\epsilon_g-\gamma$ and $c_4=\Omega(1)$.
\end{lemma}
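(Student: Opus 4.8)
The plan is to establish the bound for a single location $i\in[n]$ and a single time-slice $u\le t$, and then take a union bound over the $n$ locations and the at most $\exp(c_4 n^{b_4})$ time-slices. For a fixed $u$, the event appearing in the definition of $A_J^t$ depends only on the spatial configuration $\{J_{k,u}:k\in[n]\}$, so no temporal-mixing assumption is needed; we use only the spatial Markov chain $\PR(\{J_{k,u}:k\in[n]\})=\PR(J_{1,u})\prod_{k=1}^{n-1}\PR(J_{k+1,u}\mid J_{k,u})$, the Lipschitz bound $c_n=O(n^{-0.5-\epsilon_g})$ on each $g_i$, and the correlation bound $V(J_k)=o(n^{\gamma})$.

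Fix $i$ and $u$ and view $g_i$ as a function of the (inhomogeneous) Markov chain $J_{1,u},\dots,J_{n,u}$; by hypothesis $\EX\!\big[g_i(\{J_{k,u}:k\in[n]\})\big]=\bar{p}$ and $g_i$ is $c_n$-Lipschitz in the Hamming metric, hence satisfies the bounded-difference condition with every coordinate coefficient at most $c_n$. I would then set up the Doob martingale $M_k=\EX[g_i\mid J_{1,u},\dots,J_{k,u}]$, so that $g_i-\bar{p}=\sum_{k=1}^{n}(M_k-M_{k-1})$, and use the Markov property to reduce the conditioning on $J_{1,u},\dots,J_{k,u}$ to conditioning on $J_{k,u}$; coupling the downstream sub-chain started from two values of $J_{k,u}$ then bounds the range of $M_k-M_{k-1}$ by $D_k:=c_n\big(1+\sum_{j>k}\text{Dep}(J_{k,u},J_{j,u})\big)\le c_n\big(1+V(J_k)\big)$. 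Consequently $\sum_{k}D_k^{2}\le n\,c_n^{2}\big(1+\max_k V(J_k)\big)^{2}=O(n^{-2b_4})$, and Azuma--Hoeffding (see, e.g., \cite{BoucheronLM2013}) gives
\[
\PR\!\Big(g_i(\{J_{k,u}:k\in[n]\})>\bar{p}+\tfrac{1}{\log n}\Big)\ \le\ \exp\!\Big(-\Omega\big(n^{2b_4}/\log^2 n\big)\Big).
\]

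Union-bounding over $i\in[n]$ and $u\le t\le\exp(c_4 n^{b_4})$ multiplies the right-hand side by at most $n\exp(c_4 n^{b_4})$; since $\gamma<\epsilon_g$ gives $2b_4>b_4>0$, the exponent $\Omega(n^{2b_4}/\log^2 n)$ dominates $\ln n+c_4 n^{b_4}$ for all large $n$, so for a suitable constant $c_4=\Omega(1)$ one obtains $\PR\big((A_J^t)^{c}\big)\le\exp(-c_4 n^{b_4})$ for every $t\le\exp(c_4 n^{b_4})$ (it suffices to check $t=\exp(c_4 n^{b_4})$, as $A_J^t$ shrinks with $t$). This is the claim, with $b_4=\epsilon_g-\gamma$.

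The step I expect to be the main obstacle is the increment bound $|M_k-M_{k-1}|\le D_k$: one must show that perturbing the single hidden coordinate $J_{k,u}$ changes the conditional mean of $g_i$ by at most $O\!\big(c_n V(J_k)\big)$, i.e.\ that the sum $\sum_{j}\text{Dep}(J_{k,u},J_{j,u})$ of pairwise conditional total-variation dependence coefficients genuinely controls the \emph{cumulative} downstream influence of coordinate $k$. This requires the Markov property together with a careful coupling of the two downstream sub-chains so that the expected Hamming distance between them telescopes into these pairwise coefficients; it is the only place where the precise definition of $V(J_i)$ enters, the remainder being a routine union bound and an off-the-shelf concentration inequality for dependent variables.
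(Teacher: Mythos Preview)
Your proposal is correct and follows essentially the same architecture as the paper's proof: establish a concentration bound for a single $g_i(\{J_{k,u}\})$ around its mean $\bar p$ via the spatial Markov structure and the $c_n$-Lipschitz hypothesis, then union-bound over the $n$ locations and the $t\le\exp(c_4 n^{b_4})$ time-slices. The quantitative exponent you obtain, $\Omega(n^{2(\epsilon_g-\gamma)}/\log^2 n)$, matches the paper's $\exp(-c\,n^{(2-\delta_g)(\epsilon_g-\gamma)})$.

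The one genuine difference is in how the single-$i$, single-$u$ concentration step is handled. The paper invokes Theorem~1 of Chazottes et al.\ \cite{Chazottes2007} as a black box, bounding the operator norm $\|D\|_{\ell^2}$ of the coupling matrix by the row-sum $V(J_i)=o(n^{\gamma})$ and $\|\delta g\|_{\ell^2}^2$ by $n c_n^2$. You instead unpack that inequality by hand: Doob martingale on the chain, increment bound $|M_k-M_{k-1}|\le c_n(1+V(J_k))$ via downstream coupling, then Azuma--Hoeffding. These are not really different arguments---the Chazottes inequality is itself proved by exactly this martingale-plus-coupling scheme---so the step you flag as the ``main obstacle'' (that the expected Hamming distance between the two coupled sub-chains telescopes into $\sum_{j>k}D_{k,j}$) is precisely the content of the reference the paper cites. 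Your route is more self-contained; the paper's is shorter but offloads the coupling lemma to \cite{Chazottes2007}. Either way, the definition of $V(J_i)$ as $\sum_j\sup_{a,b}\|\PR(J_j\mid J_i=a)-\PR(J_j\mid J_i=b)\|_{TV}$ is exactly the row-sum of the Chazottes coupling matrix $D$, so the two bounds coincide.
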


\begin{theorem}\label{thm:finalFormRecursion} For the effective error at time-step $t$, given by $E_t^e$, for any $X \subset [n]$,
\begin{align*}
& \PR(\exists X \in \mc{C}_s(v): |X \cap E_{t}^e| > \alpha |X|~\big|~A_J^{t}) \\
\le& \PR(\exists X \in \mc{C}_s(v): |X \cap E_{t}| > \alpha |X|~\big|~A_J^{t})+ t \exp(-C_1 \sqrt{n}),
\end{align*}

for $\alpha \ln \frac{1}{\bar{p}} \ge \frac{101}{100} C_{\mc{G}}$ and  $C_1=\frac{c' C_{\mc{G}}}{100} $. 
\end{theorem}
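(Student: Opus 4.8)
Fix the anchor vertex $v$ and the size $s$; the additive error term produced below will be uniform in both. The plan is to decompose the left-hand event according to whether the effective error at time $t$ actually coincides with the fresh error at that step. Writing $B_t^e=\{\exists X\in\mc{C}_s(v):|X\cap E_t^e|>\alpha|X|\}$ and $B_t=\{\exists X\in\mc{C}_s(v):|X\cap E_t|>\alpha|X|\}$, one has $B_t^e\subseteq B_t\cup\{E_t^e\neq E_t\}$, because on $\{E_t^e=E_t\}$ the two events are identical. So the first move is to write
\begin{align*}
\PR(B_t^e\mid A_J^t)\le \PR(B_t\mid A_J^t)+\PR(E_t^e\neq E_t\mid A_J^t),
\end{align*}
which reduces everything to proving $\PR(E_t^e\neq E_t\mid A_J^t)\le t\exp(-C_1\sqrt{n})$.

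Next I would argue that $E_t^e$ can fail to equal $E_t$ only if some error-correction phase at a time $u<t$ left a nonempty residual that got carried forward. By Proposition~3.9 of \cite{FawziGL_STOC2018} together with Proposition~4.16 of \cite{Grospellierthesis}, the small-set-flip decoder fully corrects \emph{any} error whose $\text{MaxConn}_\alpha$ does not exceed the adversarial correction radius $c'\sqrt{n}$, leaving no residual. Letting $u^\ast$ be the first step with a surviving residual, every decoding strictly before $u^\ast$ succeeds, so nothing is carried into phase $u^\ast$ and $E_{u^\ast}^e=E_{u^\ast}$; hence $\text{MaxConn}_\alpha(E_{u^\ast})>c'\sqrt{n}$, with $u^\ast\le t-1$. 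This gives the containment $\{E_t^e\neq E_t\}\subseteq\bigcup_{u=1}^{t-1}\{\text{MaxConn}_\alpha(E_u)>c'\sqrt{n}\}$, and a union bound reduces the task to bounding $\PR(\text{MaxConn}_\alpha(E_u)>c'\sqrt{n}\mid A_J^t)$ for each $u<t$ --- the crucial gain being that each of these involves only the \emph{fresh} error $E_u$.

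To control one such term I would condition additionally on the hidden field $\{J_{k,u}:k\in[n]\}$: the manifest errors $\{L_{i,u}\}_i$ are then independent Bernoulli variables with parameters $g_i(\{J_{k,u}\})$, and on $A_J^t$ all of these are at most $\bar p+\tfrac{1}{\log n}$, so $E_u$ is stochastically dominated by an i.i.d.\ Bernoulli$(\bar p+\tfrac1{\log n})$ field on $[n]$. If $\text{MaxConn}_\alpha(E_u)>c'\sqrt n$ there is a connected $\alpha$-subset of $E_u$ of size larger than $c'\sqrt n$, and by repeatedly absorbing boundary vertices that lie in $E_u$ --- an operation that enlarges the set while preserving the property of being an $\alpha$-subset --- one reaches a connected set $X$ with $|X|>c'\sqrt n$, $\partial X\cap E_u=\varnothing$ and $|X\cap E_u|\ge\alpha|X|$, i.e.\ $A(E_u,\alpha,s',v')\neq\varnothing$ for some $v'$ and some $s'>c'\sqrt n$. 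A union bound over $v'$ and $s'$, using Lemma~\ref{lem: Csv bound} for the number of connected sets and a binomial/Chernoff tail for $|X\cap E_u|$, then gives a geometric series in $s'$; the hypothesis $\alpha\ln\tfrac1{\bar p}\ge\tfrac{101}{100}C_{\mc{G}}$ is precisely what makes the per-$s'$ exponent negative (with a margin of order $\tfrac1{100}C_{\mc{G}}$ once the $o(1)$ from the $\tfrac1{\log n}$ correction is absorbed), so the series is dominated by its first term $\exp(-\tfrac{c'C_{\mc{G}}}{100}\sqrt n)=\exp(-C_1\sqrt n)$. Summing the fewer than $t$ contributions yields the desired bound, and hence the theorem.

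The hard part will be the quantitative step of the last paragraph: converting the coarse event $\{\text{MaxConn}_\alpha(E_u)>c'\sqrt n\}$ into a clean union over fixed-size connected sets without losing the $\alpha$-subset condition (the ``grow the cluster until its boundary avoids the error'' reduction, and the role of the constraint $\partial X\cap E_u=\varnothing$, both adapted from \cite{FawziGL_STOC2018}), and then bookkeeping constants carefully enough that the combinatorial factor $e^{C_{\mc{G}}s'}$, the binomial coefficient, and the gap between $\ln\tfrac{1}{\bar p+1/\log n}$ and $\ln\tfrac1{\bar p}$ all sit inside the $\tfrac{101}{100}$ slack and leave exactly the exponent $C_1=c'C_{\mc{G}}/100$. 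A lesser, more routine difficulty is the measure-theoretic bookkeeping of conditioning on $A_J^t$, which constrains the hidden field at all times $u\le t$, while each $E_u$ depends on the field only through $\{J_{k,u}\}$; the conditional-independence structure of the model and the fact that $A_J^t$ imposes only one-sided bounds on the $g_i$ take care of this.
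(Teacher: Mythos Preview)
Your proposal is correct and follows essentially the same route as the paper: decompose via $\{E_t^e=E_t\}$, observe inductively (your ``first failure time $u^\ast$'' argument is exactly the paper's ``repeat at $t=1,2,\ldots$ and union bound'') that $\{E_t^e\neq E_t\}\subseteq\bigcup_{u<t}\{\text{MaxConn}_\alpha(E_u)>c'\sqrt n\}$, and bound each summand by coupling to i.i.d.\ Bernoulli$(\bar p+\tfrac1{\log n})$ under $A_J^t$, then applying the connected-set count (Lemma~\ref{lem: Csv bound}) together with a Chernoff tail so that the slack $\tfrac{101}{100}$ leaves the exponent $C_1=c'C_{\mc G}/100$. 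Your write-up is in fact more explicit than the paper's (which simply says ``repeat some steps of the proof of Theorem~\ref{thm: maxconn result} for $t=1$''); the only superfluous ingredient is the boundary-growing reduction to $\partial X\cap E_u=\varnothing$, which the paper invokes to pass from $\text{MaxConn}_\alpha$ to the sets $A(E,\alpha,s,v)$ but then immediately drops when taking the union bound, so you may omit it without loss.
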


\begin{proof}[Proof of Theorem~\ref{thm: maxconn result}]
Here we prove Theorem~\ref{thm: maxconn result} using the quantities defined above and Lemma~\ref{lem: Csv bound} and Theorem~\ref{thm:finalFormRecursion}.

Starting from the first equation from page 13 of \cite{FawziGL_STOC2018}, we have
$$
\mathbb{P}\Bigl[MaxConn_\alpha(E_{t}^e)>\theta\Bigr] \leq \sum\limits_{s\geq\theta}\sum\limits_{v\in V}\mathbb{P}\Bigl[A(E_{t}^e, \alpha, s, v) \neq \varnothing\Bigr]
$$

Considering  only the summand:
\begin{align*}
&\mathbb{P}\Bigl[A(E_{t}^e, \alpha, s, v) \neq \varnothing\Bigl] \nonumber \\
= & \PR\Bigl[\exists X \in \mathcal{C}_s(v): |X\cap E_{t}^e| > \alpha|X|, \partial X\cap E_{t}^e = \varnothing  \Bigl] \\
\le& \mathbb{P}\Bigl[ \exists X \in \mathcal{C}_s(v): |X\cap E_{t}^e| > \alpha|X|\Bigl]\\
 \le&\mathbb{P}\Bigl[ \exists X \in \mathcal{C}_s(v): |X\cap E_{t}^e| > \alpha|X|\Big|A_J^{t}\Bigl]+ (1-\PR(A_J^{t})) \nonumber 
\end{align*}

The last step comes from the fact that for any random variables $A$ and $B$ where $B$ is binary, $P(A) = P(B)P(A|B) + (1-P(B))P(A|\bar{B})$. Omitting the $P(B)$ and $P(A|\bar{B})$ terms gives us the required inequality, since probabilities are upper bounded by 1.\newline

Both the summed quantities in the last expression can now be bounded individually. By Lemma~\ref{lem:AJ}, we get: ~~~~
$1-\PR(A_J^{t}) \le \exp(-c_4 n^{b_4})$.\newline

Now, 
\begin{align*}
    & \mathbb{P}\Bigl[\exists X \in \mathcal{C}_s(v): |X\cap E_{t}^e| > \alpha|X|~\Big|~A_J^{t}\Bigr] \nonumber \\
     \le& \mathbb{P}\Bigl[\exists X \in \mathcal{C}_s(v): |X\cap E_{t}| > \alpha|X|~\Big|~A_J^{t}\Bigr] + t \exp(-C_1 \sqrt{n}) \\
    \le &  \sum_{X \in \mc{C}_s(v)} \mathbb{P}\Bigl[|X\cap E_{t}| > \alpha |X|~\Big|~A_J^{t}\Bigr] + t \exp(-C_1 \sqrt{n}),
\end{align*}
by invoking Theorem~\ref{thm:finalFormRecursion} and union bound. The first term can further be bounded by
\begin{align*}
     |\mc{C}_s(v)| \mathbb{P}\Bigl[|X\cap E_{t}| > \alpha |X|~\Big|~A_J^{t}\Bigr]
\end{align*}

Given any $J_{i,k} \in A_J^{t}$ for $k \le t$, $\{L_{i,k}\}$ are independent Bernoulli. Their probabilities being $1$ are upper-bounded by $\bar{p}+\frac{1}{\log n}$. By invoking simple Bernoulli coupling, the probability that $(|X \cap E_k| > \alpha |X|)$ conditioned on $J_{i,k} \in A_J^{t}$ is upper-bounded by the probability of the same event when $\{L_{i,k}\}$ are i.i.d. Bernoulli($\bar{p}+\frac{1}{\log n}$). Thus, by the Chernoff bound \cite{BoucheronLM2013}, the probability of $|X \cap E_k| > \alpha |X|$ conditioned on $J_{i,k} \in A_J^{t}$ is upper-bounded by $\exp(-|X|~\text{KL}(\alpha||\bar{p}+\frac{1}{\log n}))$. Here, KL$(a,b)$ for $a,b \in (0,1)$ is the Kullback-Leibler divergence between two Bernoulli distributions $a$ and $b$.

Thus, $\mathbb{P}\Bigl[|X\cap E_{k}| > \alpha |X|~\big|~A_J^{t}\Bigr]$ is upper-bounded by $\exp(-|X|~\text{KL}(\alpha ||\bar{p}+\frac{1}{\log n}))$. Note that for a fixed $a$ and $b$ close to $0$, the dominant term in KL$(a,b)$  is $a \ln\frac{1}{b}$.

By Lemma~\ref{lem: Csv bound}, we have $|\mc{C}_s(v)| \le \exp(s~C_\mc{G})$ for $s=\omega(\log n)$, where $C_{\mc{G}}$ is a constant that depends on the degree of $\mc{G}$ only. Hence, for $\bar{p}$ small enough such that $\alpha \ln \frac{1}{\bar{p}} \ge \frac{101}{100} C_{\mc{G}}$, $\mathbb{P}\Bigl[ A(E_{t}^e, \alpha, s, v) \neq \varnothing |~A_J^{t}\Bigl]$ vanishes exponentially as $\exp(-\frac{C_{\mc{G}}}{100} s)$.Hence,

\begin{align*}
&~~ \PR(MaxConn_{\alpha}(E_{t}^e)>c'\sqrt{n})) \nonumber \\
& \le  \sum\limits_{s\geq\theta}\sum\limits_{v\in V}\mathbb{P}\Bigl[A(E_{t}^e, \alpha, s, v) \neq \varnothing\Bigr]\nonumber \\
& \le n \cdot n \cdot (t \exp(-\frac{c' C_{\mc{G}}}{100} \sqrt{n}) + \exp(-c_4 n^{b_4})) \nonumber
\end{align*}

\end{proof}

\begin{proof}[Proof of Theorem~\ref{thm:finalFormRecursion}]
Given $A_J^t$, let us repeat some steps of the proof in Theorem~\ref{thm: maxconn result} for $t=1$ and get the bound on $\mathbb{P}\Bigl[\exists X \in \mathcal{C}_s(v): |X\cap E_{1}^e| > \alpha|X|~\Big|~A_J^{t}\Bigr]$ for $s=c' \sqrt{n}$ as $\exp(-C_1 \sqrt{n})$ for $C_1=\frac{c' C_{\mc{G}}}{100}$. Thus, with probability $1-\exp(-C_1 \sqrt{n})$, $E_2^e=E_2$ since all errors are corrected at $t=1$ if 
$MaxConn_\alpha(E_1) \le c' \sqrt{n}$. 

Repeat the same procedure at $t=2$ and so on. This, by union bound, implies that till $t$,   with probability $1 - t \cdot \exp(-C_1 \sqrt{n})$, we have $E_t^e=E_t$ given $A_J^t$.
\end{proof}

\begin{proof}[Proof of Lemma~\ref{lem:AJ}]
We first prove that 
$\PR(g_i(\{J_{i,u}\}) \le (1+\frac{1}{\log n})) \ge 1 - \exp(-c_4 n^{b_4})$. Then, the lemma follows using the union bound and the fact that $n \exp(-c_4 n^{b_4}) = \exp(-(c_4 - 1/\log n)~n^{b_4}) \approx \exp(-c_4 n^{b_4})$ for all sufficiently large $n$.

We use the result from Theorem 1 in \cite{Chazottes2007} to bound $\PR\oppar{g_i(\{J_{i,u}\}) \le \bar{p}+\frac{1}{\log n}}$, where the definition of elements of the coupling matrix D is:
\[
D_{i,j} = \underset{a,b}{max} \ \, \PR_{i,a,b}^J \oppar{J^{(1)}_j \neq J^{(2)}_j}
\]
where $\PR^J_{i,a,b}$ is the maximal coupling of the conditional laws $(J_{> i}| J_{<i}, J_i = a)$ and $(J_{> i}| J_{<i}, J_i = b)$. As $\{J_i\}$ is 1D Markov, the conditional laws become $(J_{> i}|J_i = a)$ and $(J_{>i}|J_i = b)$. Using the coupling definition of total variation distance (denoted by TV below) and the fact that $J_{i,t}$ are spatially Markov at any $t$ (not temporally Markov) we get:
\begin{align*}
    D_{i,j} = \underset{a,b}{sup} \ \, \| \PR(J_j| J_i=a) -\PR(J_j|J_i=b)\|_{TV}.
\end{align*}

This quantity has been defined before as correlation in the model and its equivalence with $\text{cov}(L_i, L_j)$ was discussed. The rest follows from the following lemma and by taking a union bound over time.
\end{proof}

\begin{lemma}
\label{lem:ChazotteWithCn}
Assuming $D_{i, j}$ decays no slower than $\frac{1}{dist(i, j)}$ and $g_i$ is $c_n$-Lipschitz, where $c_n$ is $O\left(\frac{c}{n^{0.5 + \epsilon_g}}\right)$, we have 
\begin{gather*}
\PR\Big({g_i(\{J_{i,u}\}) - \EX[g_i(\{J_i\})] \geq \frac{1}{\log n}}\Big) \leq \exp \Big({-\frac{1}{c} n^{(2 - \delta_g)\epsilon_g}}\Big)
\end{gather*}
for any $\delta_g>0$.
\end{lemma}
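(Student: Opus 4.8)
The plan is to read the statement as a bounded-differences (McDiarmid-type) concentration estimate for the Lipschitz observable $g_i$ evaluated on the weakly dependent one-dimensional Markov field $\{J_{k}: k\in[n]\}$, and to derive it from the coupling-based concentration inequality of Chazottes--Collet--K\"ulske--Redig (Theorem~1 of \cite{Chazottes2007}). That inequality is precisely the upgrade of McDiarmid's inequality from product measures to measures satisfying a summable-coupling (Dobrushin-type) condition: the i.i.d.\ sub-Gaussian variance proxy $\sum_k \delta_k(f)^2$ is replaced by $\|D\|^2\sum_k\delta_k(f)^2$, where $D$ is the coupling matrix whose entries $D_{i,j}$ we already identified above as the maximal-coupling total-variation quantities $\sup_{a,b}\|\PR(J_j\mid J_i=a)-\PR(J_j\mid J_i=b)\|_{TV}$.

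Concretely I would proceed in four steps. First, read off the per-coordinate oscillations of $f=g_i$: since $g_i$ is $c_n$-Lipschitz with respect to Hamming distance, changing a single coordinate $J_k$ moves $g_i$ by at most $c_n$, so $\delta_k(g_i)\le c_n$ for all $k$, and hence the variance proxy is $\sum_{k\in[n]}\delta_k(g_i)^2\le n\,c_n^2 = O(c^2 n^{-2\epsilon_g})$ using $c_n=O(c\,n^{-0.5-\epsilon_g})$. Second, bound the norm of $D$ entering the inequality: by spatial stationarity $D$ has equal $\ell^1$ and $\ell^\infty$ operator norms $\max_i\sum_j D_{i,j}=\max_i V(J_i)$, so the $\ell^2\!\to\!\ell^2$ operator norm is at most this quantity by Riesz--Thorin interpolation; under the hypothesis that $D_{i,j}$ decays at rate $1/\mathrm{dist}(i,j)$ (or faster), the one-dimensional harmonic sum gives $\max_i\sum_j D_{i,j}=O(\log n)$, so $\|D\|=O(\log n)$. (One must also check, routinely, that the 1D Markov law meets the coupling hypothesis of \cite{Chazottes2007}: it does, with $D$ essentially upper-triangular and finite-rowsum, even for a countably infinite single-site alphabet, since $g_i$ is bounded.) Third, apply Theorem~1 of \cite{Chazottes2007} with deviation $u=1/\log n$, recalling $\EX[g_i]=\bar p$:
\[
\PR\!\Big(g_i(\{J_{i,u}\})-\bar p\ge \tfrac{1}{\log n}\Big)\;\le\;\exp\!\Big(-\frac{(1/\log n)^2}{2\,\|D\|^2\sum_k\delta_k(g_i)^2}\Big)\;=\;\exp\!\Big(-\Omega\!\big(\tfrac{n^{2\epsilon_g}}{c^2\log^4 n}\big)\Big).
\]
Fourth, absorb the polylogarithmic loss: for every fixed $\delta_g>0$ and all sufficiently large $n$ one has $n^{2\epsilon_g}/\log^4 n\ge n^{(2-\delta_g)\epsilon_g}$, so after relabeling the absolute constant (its dependence on the Lipschitz and decay constants is through $1/c^2$, which we rename $1/c$ to match the statement) the bound becomes $\exp(-\tfrac1c n^{(2-\delta_g)\epsilon_g})$, as claimed. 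The same arithmetic with $\max_i V(J_i)=o(n^\gamma)$ instead of the $O(\log n)$ bound yields exponent $n^{2(\epsilon_g-\gamma)-o(1)}$, which is what ultimately feeds Lemma~\ref{lem:AJ}.

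I expect the only genuine work to be Step~2 together with the bracketed hypothesis check: matching the precise assumptions and the exact matrix norm in Theorem~1 of \cite{Chazottes2007} to our setting --- confirming that the coupling matrix built from maximal couplings of the conditional laws of the 1D Markov chain is the one appearing there, that a countably infinite single-site alphabet is admissible (it is, since only boundedness and the single-coordinate oscillation of $g_i$ enter), and that the $1/\mathrm{dist}$ decay really produces only a $\mathrm{polylog}(n)$ inflation of the variance proxy rather than a polynomial one. Everything after the black box is the elementary computation above; I would also note the bound is one-sided exactly as needed for the event defining $A_J^t$.
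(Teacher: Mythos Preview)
Your proposal is correct and follows essentially the same route as the paper: both invoke Theorem~1 of \cite{Chazottes2007}, bound each coordinate of $\delta g_i$ by $c_n$ so that $\|\delta g_i\|_{\ell^2}^2\le n c_n^2$, bound $\|D\|_{\ell^2\to\ell^2}$ by the maximal row sum $V(J_i)$ (polylogarithmic under the $1/\mathrm{dist}$ decay, or $n^\gamma$ in the more general case), substitute $u=1/\log n$, and absorb the polylog loss into the $\delta_g$ slack. Your justification of the operator-norm bound via Riesz--Thorin/Schur is in fact a bit cleaner than the paper's ``scale to sub-stochastic, eigenvalues $\le 1$'' phrasing, since the latter controls the spectral radius rather than the $\ell^2$ operator norm directly; otherwise the arguments coincide.
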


The Proof for the above lemma can be found in the appendix. \\
\subsection*{\bf Conclusion}
In this paper, we proved a positive noise threshold for square-root distance expander qLDPC codes against long-range correlated errors. This is the first step towards  answering the question: can constant overhead sublinear distance codes offer fault-tolerance against highly correlated errors? The analytical noise threshold  has a relatively simple expression in terms of the code rate when the code rate is small. This offers a simple tradeoff between the space overhead for these codes and the achievable noise threshold. The important practical question of local syndrome extraction for  qLDPC codes was recently answered affirmatively \cite{BerthusenGG2025,PattisonKP2025}. A future quest of practical significance would thus be to understand the performance of qLDPC codes with local syndrome extraction against long-range correlated errors.

\subsection*{{\bf Acknowledgments}}
AC thanks ANRF India (formerly, SERB India) for support through grant CRG/2023/005345 and the Ministry of Education, India.

\bibliography{bibfile}
\appendix 

\section{Proof of Lemma \ref{lem: Csv bound}}

Note that for each $v$, $|\mc{C}_s(v)| \le  |\mc{C}_s(\mc{G})|$, and from \cite{FawziGL_STOC2018} we have $|\mc{C}_s(\mc{G})| \le n \left((d_{\mc{G}}-1) (1+\frac{1}{d_{\mc{G}}-2})^{d_{\mc{G}}-2} \right)^s$. Hence, $|\mc{C}_s(v) | \le e^{C_\mc{G}s}$ for $s =\omega(\log n)$,
\\ where $C_\mc{G} = (d_{\mc{G}}-1) (1+\frac{1}{d_{\mc{G}}-2})^{d_{\mc{G}}-2}$ .

\section{Proof of Lemma~\ref{lem:ChazotteWithCn}}

From the work of Chazottes et al. \cite{Chazottes2007}, we get
\[
\mathbb{P}\left\{ g - \mathbb{E}g \geq t \right\} 
\leq \exp\left( - \frac{2t^2}{\|D\|_{\ell^2(\mathbb{N})}^2 \, \|\delta g\|_{\ell^2(\mathbb{N})}^2} \right),
\]

where $D$ is the matrix with elements $D_{i,j}$ and $\delta g$ is an $n$-dimensional vector as defined in \cite{Chazottes2007}. It follows from \cite{Chazottes2007} that if $g$ is $c_n$-Lipschitz, then each component of $\delta g$ is at most $c_n$.

Consider $V(J_i)=\Theta(n^\gamma)$ for $\gamma>0$. Note that the row-sum of $D$ is upper-bounded by $V(J_i)=\frac{n^{\gamma}}{\gamma}$. Hence, upon scaling by that factor, the matrix would be a sub-stochastic matrix whose eigenvalues are no greater than $1$. The following steps then hold:
\begin{align*}
&\PR\oppar{g_i(\{J_{i,u}\}) \le \bar{p}+\frac{1}{\log n}} \\=&\PR\oppar{g_i(\{J_{i,u}\})  - \mathbb{E}[g_i(\{J_{i,u}\}] \le \frac{1}{\log n}} \\
\leq& 
\exp\left( - \frac{2\left(\frac{1}{\log n}\right)^2}{n^{2\gamma} c_n^2n} \right)
=
\exp\left( - \frac{2\left(\frac{1}{\log n}\right)^2}{n^{2\gamma} \frac{c^2}{n^{1 + 2 \epsilon_g}} n} \right)\\
=& \exp \left( -c\frac{n^{2\epsilon_g}}{(n^{\gamma})^2 (\log n)^2} \right) \\
\le & \exp(- c n^{(2-\delta_g)(\epsilon_g-\gamma)}) \approx \exp(- c n^{2(\epsilon_g-\gamma)})
\end{align*}

The rest follows from the fact that polylogarithmic terms are smaller than any polynomial of arbitrarily small positive degree.\\
In the other case, the row-sum of $D$ is $O(1)$, and hence the bound also follows.

\end{document}